\setlist[itemize]{noitemsep} 
\newcommand{\R}{\mathbb{R}}
\newtheorem{theorem}{Theorem}
\newtheorem{lemma}{Lemma}
\begin{document}

\begin{frontmatter}

\title{Bayesian Change Point Detection for Functional Data}

\author{Xiuqi Li}
\thanks{Operations Research Graduate Program, North Carolina State University, xli35@ncsu.edu}
\and
\author{Subhashis Ghosal}
\thanks{Department of Statistics, North Carolina State University, sghosal@stat.ncsu.edu}

\begin{abstract}
We propose a Bayesian method to detect change points for functional data. We extract the features of a sequence of functional data by the discrete wavelet transform (DWT), and treat each sequence of feature independently. We believe there is potentially a change in each feature at possibly different time points. The functional data evolves through such changes throughout the sequences of observations. The change point for this sequence of functional data is the cumulative effect of changes in all features. We assign the features with priors which incorporate the characteristic of the wavelet coefficients. Then we compute the posterior distribution of change point for each sequence of feature, and define a matrix where each entry is a measure of similarity between two functional data in this sequence. We compute the ratio of the mean similarity between groups and within groups for all possible partitions, and the change point is where the ratio reaches the minimum. We demonstrate this method using a dataset on climate change.\\
{\bf Keywords:} Change point detection, Functional data, Discrete wavelet transform, Posterior consistency

\end{abstract}
\end{frontmatter}

\section{Introduction}
\par Change point detection has always been an important aspect in data analysis. In recent years, there are increasing interests in developing methods to detect change point for functional data. \cite{Berkes} developed a method that works with the difference of mean functions projected on the principal components of the data. \cite{Zhang} developed a self-normalization (SN) based test to identify potential change points in the dependence structure of functional observations. \cite{Aston} also proposed a method to detect change points for dependent functional observations, and they were particularly interested in the case where the change point is an epidemic change (a change occurs and then the observations return to baseline at a later time). \cite{Sharipov} developed a new test for structural changes in functional data that based on Hilbert space theory and critical values are deduced from bootstrap iterations.  \cite{Aue2018} proposed a method to uncover structural breaks in functional data that does not rely on dimension reduction techniques.
\par In this paper, we propose a Bayesian method to detect change points for functional data. We extract the features of a sequence of functional data by the discrete wavelet transform (DWT), and treat each sequence of feature independently. We believe there is potentially a change in each feature at possibly different time points. The functional data evolves through such changes throughout the sequence of observations. The change point for this sequence of functional data is the cumulative effect of changes in all features. Such gradual evolutionary model for changes seems to be very appropriate for functional data, since functions have many aspects and it is hard to imagine that all those change at once. When such a cumulative effect becomes substantial to make the following functional observations significantly dissimilar with the previous ones---in that the variation across the two groups relative to the variation within the two groups is the maximum at that time point. We assign the features with priors which incorporate the characteristic of the wavelet coefficients. Then we compute the posterior distribution of change point for each sequence of feature, and define a matrix where each entry is a measure of similarity between two functional data in this sequence. We compute the ratio of the mean similarities between groups and within groups for all possible partitions, and the change point is where the ratio reaches the minimum. Once we have detected one change point, we can successively apply the procedure to subgroups divided by the change point. We can continue finding the change points in subgroups until a stopping criterion has been met. For example, we can stop if a certain number of change points have been detected, or there is no significant difference in the previous and following observations. Thus, this method can be inherently extended to multiple change points detection.

\section{Model}
\par We follow the formulation of \cite{Suarez} for the structure of functional observations, who applied their model in the context of clustering. We extend their approach to change point detection for functional data, which can be regarded as a special case of clustering with the constraint that for each characteristic, there are at most two clusters and they are linearly ordered. Suppose that the functional observations arise from true signals $f_i(t)$, $t\in[0,1]$, $i=1,\dots,n$, corrupted by some noise process, where $n$ denotes the sample size. We observe the functional data at some discrete time points. Then the model can be represented as 
\begin{align}
Y_i(T_l)= f_i(T_l)+\varepsilon_{il}, 
\end{align}
where $\varepsilon_{il}$ is assumed to follow a normal distribution with mean 0 and variance $\sigma^2$, and is independent across $i$ and $l$. Let $Y_i=(Y_i(T_1),\dots,Y_i(T_m))^T$ be the $i$th obsvervation at points $T_1,\dots,T_m$, where $T_l\in[0,1]$, for $l=1,\dots,m$. Similarly, let $f_i=(f_i(T_1),\dots,f_i(T_m))^T$, and $\varepsilon_i=(\varepsilon_i(T_1),\dots,\varepsilon_i(T_m))^T$. For functional data, the discrete wavelet transform (DWT) is one of the most common feature extraction technique. To implement the DWT, $m$ needs to be a power of 2, and $T_1,\dots,T_m$ need to be equidistant. For $m$ that is not a power of 2, we can first  smooth to obtain a function, and then take a power of 2 number of discrete points from that function. In terms of the orthonormal basis $\{\phi_0\}\cup\{\psi_{jk}:j=0,\dots,J-1,k=0,\dots,2^{j}-1\}$, we can define the following DWT operator \citep{Antoniadis}:
\begin{align}
W: \R^m\rightarrow\R^m, \:\: f\rightarrow(\alpha_0,\beta_{0},\dots,\beta_{J-1}),
\end{align}
with $\beta_j=(\beta_{j,0},\dots,\beta_{j,2^{j}-1})$.
Applying the DWT operator on $Y_i=f_i+\varepsilon_i$, then we have
\begin{align}
WY_i=Wf_i+W\varepsilon_i,
\end{align}
where $W\varepsilon_i\stackrel{d}{=}\varepsilon_i$ by the orthogonality of $W$. Let $\alpha_0$ denote the scaling coefficient at the level 0, and $\beta_{jk}$ be the wavelet coefficients at the multiresolution level $j,k$. As a result, (3) can be rewritten as
\begin{align}
a^{(i)}_0=\alpha^{(i)}_0+e^{(i)}_0, \:\: b^{(i)}_{jk}=\beta^{(i)}_{jk}+e^{(i)}_{jk},	
\end{align}
where $e_0^{(i)}$ and $e_{jk}^{(i)}$ follow a normal distribution with mean 0 and variance $\sigma^2$, for $k=0,\dots,2^{j}-1, j=0,\dots,J-1$.
\par When the functional data are (essentially) observed continuously in time, we also consider the following infinite Gaussian white noise model
\begin{align}
dY_i(t)=f_i(t)dt+\sigma dB_i(t),
\end{align}
where $B_i(\cdot)$ are independent Brownian motions on $[0,1]$. Let
\begin{align}
\label{eqn:eqlabel}
\begin{split}
&a_0^{(i)}=\int_{0}^{1}\phi_0(t)dY_i(t), \: \alpha_{0}^{(i)}=\int_{0}^{1}\phi_0(t)f_i(t)dt,\\
&b_{jk}^{(i)}=\int_{0}^{1}\psi_{jk}(t)dY_i(t), \: \beta_{jk}^{(i)}=\int_{0}^{1}\psi_{jk}(t)f_i(t)dt,\\
&e_0^{(i)}=\sigma\int_{0}^{1}\phi_0(t)dB_i(t), \: e_{jk}^{(i)}=\sigma\int_{0}^{1}\psi_{jk}(t)dB_i(t).
\end{split}
\end{align}
Then $e_0^{(i)}$ and $e_{jk}^{(i)}$ follow the normal distribution with mean 0 and variance $\sigma^2$, for $k=0,\dots,2^j-1$, $j=1,2,\dots$, independent of each other, for each $i=1,\dots,n$. 
\par To detect the change point of this sequence of functional data, we first find the change in each component, that is, we detect the change for each feature $\beta_{jk}$, and we decide the overall change point from them. 
\par In the section on posterior consistency, we state the results only in terms of the infinite model. However, in practice, we can only work with the finite model. By letting $\beta_{jk}^{(i)}=0$ for all $j>J$, the infinite model can be related to the finite one with a random $J$. If the coefficients are obtained following the schema of \cite{Abramovich1998}, then $J$ will have a limiting Poisson distribution by Proposition 1 of \cite{Suarez}. Under this schema, the total number of nonzero coefficients also has a limiting Poisson distribution.

\section{Prior Distributions}
\par For each $\beta^{(i)}_{jk}$, we define the following probabilities:
\begin{align}
\label{eqn:eqlabel}
\begin{split}
&\text{P}(\beta^{(i)}_{jk}\neq 0)=\pi_{j}, \ \text{P}(\beta^{(i)}_{jk}=0)=1-\pi_{j}.
\end{split}
\end{align}
\par As the wavelet coefficients of a signal function are sparse, \cite{Abramovich1998} proposed the following priors incorporating this characteristic feature of wavelet coefficients:
\begin{align}
\label{eqn:eqlabel}
\begin{split}
\beta^{(i)}_{jk}\stackrel{\text{ind}}{\sim} \pi_{j}\text{N}(0,c^2_{j}\sigma^2)+(1-\pi_{j})\delta_0,
\end{split}
\end{align}
where $\delta_0$ is a point mass at 0, and the hyperparameters in (8) are given by 
\begin{align}
c_j^2=\nu_12^{-\gamma_1j},\: \pi_j=\text{min}(1,\nu_22^{-\gamma_2j}),\: j=0,\dots,J-1,
\end{align}
and $\nu_1, \nu_2, \gamma_1\geq0$, and $0\leq\gamma_2\leq1$.  A vague prior is placed on $\alpha_0$.  
\par Let $\psi$ be a mother wavelet function of regularity $r$. Consider constants $s$, $p$ and $q$ such that 
$\text{max}(0,1/p - 1/2) < s < r$, $1\leq p,q\leq\infty$. If either
\begin{align}
\label{eqn:eqlabel}
\begin{split}
&s+\frac{1}{2}-\frac{\gamma_2}{p}-\frac{\gamma_1}{2}<0,
\end{split}
\end{align}
or
\begin{align}
\label{eqn:eqlabel}
\begin{split}
&s+\frac{1}{2}-\frac{\gamma_2}{p}-\frac{\gamma_1}{2}=0, \:\text{and}\:0\leq\gamma_2<1, 1\leq p<\infty, q=\infty,
\end{split}
\end{align}
then $f\in\mathcal{B}^s_{p,q}$ almost surely, where $\mathcal{B}^s_{p,q}$ denotes Besov space of index $(p,q)$ and smoothness $s$ \citep{Abramovich1998}.
\par The prior on $\sigma$ is given by
\begin{align}
\sigma^2\sim \text{IG}(\theta,\lambda),
\end{align}
where IG stands for the inverse gamma distribution. Let $g$ denote the density funciton of the inverse-gamma distribution.
\section{Posterior Probabilities of Change Point}
\par For any $j,k$, let $\tau_{jk}$ denote the change point, and let $\tau_{jk}$ take possible values $1,\dots,n$. Let $\rho_i$ denote the prior probability of changing at point $i$, where $\rho_i>0$, $\sum^{n}_{i=1}\rho_i=1$, and  $i=1,\dots,n$. Then the posterior probability of $\tau_{jk}=i$ is
\begin{align}
\text{P}(\tau_{jk}=i|b^{(1)}_{jk},\dots,b^{(n)}_{jk})=\frac{\text{P}(b^{(1)}_{jk},\dots,b^{(n)}_{jk}|\tau_{jk}=i)\rho_i}{\sum^{N}_{l=1}\text{P}(b^{(1)}_{jk},\dots,b^{(n)}_{jk}|\tau_{jk}=l)\rho_l}.
\end{align}
\par The main problem is to compute the marginal likelihood $\text{P}(b^{(1)}_{jk},\dots,b^{(n)}_{jk}|\tau_{jk}=i)$. When $\tau_{jk}=1$, it is the initial state meaning no change. For $\tau_{jk}=2,\dots,n$, the marginal likelihood is derived from four scenarios: change from zero to zero (which is no change), change from zero to non-zero, change from non-zero to zero, and change from non-zero to non-zero. 

\subsection{Initial State}
\par When $\tau_{jk}=1$, this is the initial state. 
If the initial state is zero, then the marginal likelihood is given by 
\begin{align}
(1-\pi_j)\int \Bigl\{\prod^{n}_{i=1}\phi(b^{(i)}_{jk};0,\sigma^2)\Bigr\} g(\sigma^2;\theta,\lambda)d\sigma.
\end{align}
If the initial state is non-zero, then we have 
\begin{align}
\pi_j\int\int \Bigl\{\prod^{n}_{i=1}\phi(b^{(i)}_{jk};\xi,\sigma^2)\Bigr\} \phi(\xi;0,c^2_j\sigma^2) g(\sigma^2;\theta,\lambda)d\xi d\sigma.
\end{align}
Thus, the marginal likelihood of the initial state is 
\begin{align}
\label{eqn:eqlabel}
\begin{split}
&\text{P}(b^{(1)}_{jk},\dots,b^{(n)}_{jk}|\tau_{jk}=1)\\
&\phantom{\text{P}}=(1-\pi_j)(2\pi)^{-n/2}\frac{\lambda^\theta}{\Gamma(\theta)}\frac{\Gamma(n/2+\theta)}{\displaystyle [\frac{\sum^{n}_{i=1}(a^{(i)}_0)^2}{2}+\lambda]^{(n/2+\theta)}}\\
&\phantom{=(1}+\pi_j(2\pi)^{-n/2}\frac{\lambda^\theta}{\Gamma(\theta)}(c^2_jn+1)^{-1/2}\\
&\phantom{+\pi_j(2\pi)}\times\frac{\Gamma(n/2+\theta)}{\displaystyle \Big[\frac{\sum^{n}_{i=1}(a^{(i)}_0)^2}{2}-\frac{c^2_j}{c^2_jn+1}\frac{(\sum^{n}_{i=1}a^{(i)}_0)^2}{2}+\lambda\Big]^{(n/2+\theta)}}.
\end{split}
\end{align}

\subsection{Non-initial State}
For $\tau_{jk}=i$, where $i=2,\dots,n$, if changing from zero to zero, that is, no change, then the marginal likelihood is
\begin{align}
(1-\pi_j)^2\int \Bigl\{\prod^{n}_{i=1}\phi(b^{(i)}_{jk};0,\sigma^2)\Bigr\} g(\sigma^2;\theta,\lambda)d\sigma.
\end{align}
If changing from zero to non-zero at $\tau_{jk}=i$, then the marginal likelihood is 
\begin{align}
\label{eqn:eqlabel}
\begin{split}
(1-\pi_j)\pi_j\int\int& \Bigl\{\prod^{i-1}_{l=1}\phi(b^{(l)}_{jk};0,\sigma^2)\Bigr\} \Bigl\{\prod^{n}_{l=i}\phi(b^{(l)}_{jk};\xi,\sigma^2)\Bigr\} \phi(\xi;0,c^2_j\sigma^2) g(\sigma^2;\theta,\lambda)d\xi d\sigma.
\end{split}
\end{align}
If changing from non-zero to zero at $\tau_{jk}=i$, then the marginal likelihood is
\begin{align}
\label{eqn:eqlabel}
\begin{split}
\pi_j(1-\pi_j)\int\int& \Bigl\{\prod^{i-1}_{l=1}\phi(b^{(l)}_{jk};\xi,\sigma^2)\Bigr\} \Bigl\{\prod^{n}_{l=i}\phi(b^{(l)}_{jk};0,\sigma^2)\Bigr\}\phi(\xi;0,c^2_j\sigma^2) g(\sigma^2;\theta,\lambda)d\xi d\sigma.
\end{split}
\end{align}
If changing from non-zero to non-zero at $\tau_{jk}=i$, then the marginal likelihood is 
\begin{align}
\label{eqn:eqlabel}
\begin{split}
\pi_j^2\int\int\int& \Bigl\{\prod^{i-1}_{l=1}\phi(b^{(l)}_{jk};\xi_1,\sigma^2)\Bigr\} \Bigl\{\prod^{n}_{l=i}\phi(b^{(l)}_{jk};\xi_2,\sigma^2)\Bigr\}\\
& \phi(\xi_1;0,c^2_j\sigma^2)\phi(\xi_2;0,c^2_j\sigma^2) g(\sigma^2;\theta,\lambda)d\xi_1 d\xi_2 d\sigma.
\end{split}
\end{align}
Thus, we have
\begin{align}
\label{eqn:eqlabel}
\begin{split}
&\text{P}(b^{(1)}_{jk},\dots,b^{(n)}_{jk}|\tau_{jk}=i)\\
&\phantom{\text{P}}=(1-\pi_j)^2(2\pi)^{-n/2}\frac{\lambda^\theta}{\Gamma(\theta)}\frac{\Gamma(n/2+\theta)}{\displaystyle\Big[\frac{\sum^{n}_{l=1}(b^{(l)}_{jk})^2}{2}+\lambda\Big]^{(n/2+\theta)}}\\	
&\phantom{=(1}+\pi_j(1-\pi_j)(2\pi)^{-n/2}\frac{\lambda^\theta}{\Gamma(\theta)}[c^2_j(i-1)+1]^{-1/2}\frac{\Gamma(n/2+\theta)}{\displaystyle[B_{i-1}+\lambda]^{(n/2+\theta)}}\\
&\phantom{=(1}+(1-\pi_j)\pi_j(2\pi)^{-n/2}\frac{\lambda^\theta}{\Gamma(\theta)}[c^2_j(n-i+1)+1]^{-1/2}\frac{\Gamma(n/2+\theta)}{\displaystyle[B_{n-i+1}+\lambda]^{(n/2+\theta)}}\\
&\phantom{=(1}+\pi_j^2(2\pi)^{-n/2}\frac{\lambda^\theta}{\Gamma(\theta)}[c^2_j(i-1)+1]^{-1/2}[c^2_j(n-i+1)+1]^{-1/2}\\
&\phantom{+\pi_j^2(2\pi)}\times\frac{\Gamma(n/2+\theta)}{\displaystyle\Big[B_{i-1}+B_{n-i+1}-\frac{\sum^{n}_{l=1}(b^{(l)}_{jk})^2}{2}+\lambda\Big]^{(n/2+\theta)}},
\end{split}
\end{align}
where
\begin{align}
B_{i-1}=\frac{\sum^{n}_{l=1}(b^{(l)}_{jk})^2}{2}-\frac{c^2_j}{c^2_j(i-1)+1}\frac{(\sum^{i-1}_{l=1}b^{(l)}_{jk})^2}{2},
\end{align}
and
\begin{align}
\tilde{B}_{n-i+1}=\frac{\sum^{n}_{l=1}(b^{(l)}_{jk})^2}{2}-\frac{c^2_j}{c^2_j(n-i+1)+1}\frac{(\sum^{n}_{l=i}b^{(l)}_{jk})^2}{2}.
\end{align}

\par Similarly, we can compute the marginal likelihood $\text{P}(a^{(1)}_{0},\dots,a^{(n)}_{0}|\tau_{0}=i)$, where $\tau_{0}$ denotes the change point in $a^{(1)}_{0},\dots,a^{(n)}_{0}$, and obtain the posterior probability of $\tau_{0}=i$ through Bayes's rule.

\section{Change Point Detection}
The change point of a sequence of functional data is the accumulative effect of all features where the contrast is the largest before and after. Since its the special case of clustering, following \cite{Suarez}, to quantify the similarity between two funcitonal data, we need to consider the following similarity matrix. Suppose that there are $J$ levels and $i<i'$. Then the similarity between $i$th and $i'$th functional data is 
\begin{align}
S(i,i')=\frac{1}{2^J}\big[\mathbbm{1}(\alpha^{(i)}_{0}=\alpha^{(i')}_{0})+\sum^{J-1}_{j=0}\sum^{2^j-1}_{k=0}\mathbbm{1}(\beta^{(i)}_{jk}=\beta^{(i')}_{jk})\big].
\end{align}
\par For any $k$ that divides the data into two groups, we compute the ratio of the mean similarity between group and the mean similarity within group. We denote the ratio by $C(k)$. The change point is where this ratio is the minimum. Here we assume that $3\leq k\leq n-1$, which means there are at least two data points in each group. Then
\begin{align}
\text{argmin}_k\:\:C(k)=\frac{\sum_{1\leq i\leq k-1,k\leq j\leq N}S_{ij}/[(k-1)(n-k+1)]}{\{\sum_{1\leq i \leq j \leq k-1}S_{ij}+\sum_{k\leq i \leq j \leq N}S_{ij}\}/[{{k-1}\choose{2}}+{{n-k+1}\choose{2}}]}.
\end{align}

\par Since we cannot obtain the true value of $\alpha_0$ and $\beta_{jk}$, we take posterior expectation of (24) given the data. Then we have
\begin{align}
\label{eqn:eqlabel}
\begin{split}
&\text{E}(S(i,i')|a^{(1)}_{0},\dots,a^{(n)}_{0},b^{(1)}_{jk},\dots,b^{(n)}_{jk})\\
&\phantom{\text{E}}=
\frac{1}{2^J}\big[\text{P}(\alpha^{(i)}_{0}=\alpha^{(i')}_{0}|a^{(1)}_{0},\dots,a^{(n)}_{0})+\sum^{J-1}_{j=0}\sum^{2^j-1}_{k=0}\text{P}(\beta^{(i)}_{jk}=\beta^{(i')}_{jk}|b^{(1)}_{jk},\dots,b^{(n)}_{jk})\big].
\end{split}
\end{align}
$\text{P}(\beta^{(i)}_{jk}=\beta^{(i')}_{jk}|b^{(1)}_{jk},\dots,b^{(n)}_{jk})$ can be obtained from the expression for the posterior probability of the change point:
\begin{align}
\label{eqn:eqlabel}
\begin{split}
&\text{P}(\beta^{(i)}_{jk}=\beta^{(i')}_{jk}|b^{(1)}_{jk},\dots,b^{(n)}_{jk})\\
&\phantom{\text{P}}=\text{P}(\tau_{jk}\leq i \: \text{or} \: \tau_{jk}\geq i'+1|b^{(1)}_{jk},\dots,b^{(n)}_{jk})\\
&\phantom{\text{P}}=\sum^{i}_{t=1}\text{P}(\tau_{jk}=t|b^{(1)}_{jk},\dots,b^{(n)}_{jk})+\sum^{n}_{t=i'+1}\text{P}(\tau_{jk}=t|b^{(1)}_{jk},\dots,b^{(n)}_{jk}),
\end{split}
\end{align}
where $\text{P}(\tau_{jk}=t|b^{(1)}_{jk},\dots,b^{(n)}_{jk})$ are obtained from (13). Similarly, we can obtain $\text{P}(\alpha^{(i)}_{0}=\alpha^{(i')}_{0}|a^{(1)}_{0},\dots,a^{(n)}_{0})$.

\section{Posterior Consistency}
\par In this section, we state a posterior consistency result for the infinite model. With some minor notational modification, the result also holds for the finite model with a fixed depth $J$.
\par We study consistency in our model when $\sigma^2\rightarrow0$. This is equivalent to averaging $r$ i.i.d. replications of the observations with $r\rightarrow\infty$, and replacing $\sigma^2$ by $\sigma^2/r$ with a known $\sigma^2$. To simiplify notation, we assume that $\alpha_0^{(i)}=0$ for $i=1,\dots,n$. Let $\bm{f}=(f_1,\dots,f_n)$. Then the square of the norm on $\bm{f}$ is defined by
\begin{align}
\|\bm{f}\|^2=\sum_{i=1}^{n}\|f_i\|_2^2=\sum_{i=1}^{n}\sum_{j=0}^{\infty}\sum_{k=0}^{2^j-1}|\beta_{jk}^{(i)}|^2.
\end{align}
We define the square of Sobolev norm on $\bm{f}$ as
\begin{align}
\|\bm{f}\|_{\mathcal{H}_{n}^s}=\sum_{i=1}^{n} \sum_{j=0}^{\infty}2^{2js}\|\beta_{j\cdot}^{(i)}\|_2^2.
\end{align}
We denote this space by $\mathcal{H}_{n}^s$, where $s$ is the number of weak derivatives of the function in $L_2([0,1])$. Let $D_r$ be the set of all observations.

\begin{theorem}
	Let $\gamma_1>2s+1$, and $\bm{f}_0\in \mathcal{H}_n^s$ be the vector of true functions. Then the posterior is consistent, i.e., for any $\epsilon>0$, $\Pi(\|\bm{f}-\bm{f}_0\|<\epsilon|D_r)\rightarrow1$ in probability as $r\rightarrow\infty$.
\end{theorem}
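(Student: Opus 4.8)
The plan is to control the posterior through its mean-square error and then invoke Markov's inequality, so that it suffices to show the integrated (Bayes) risk $\mathrm{E}_{\bm{f}_0}\big[\mathrm{E}(\|\bm{f}-\bm{f}_0\|^2\mid D_r)\big]\to 0$ as $r\to\infty$; since $\|\bm f-\bm f_0\|^2\ge 0$, a vanishing mean forces $\mathrm{E}(\|\bm{f}-\bm{f}_0\|^2\mid D_r)\to 0$ in $P_{\bm f_0}$-probability, whence $\Pi(\|\bm{f}-\bm{f}_0\|\ge\epsilon\mid D_r)\le \epsilon^{-2}\,\mathrm{E}(\|\bm{f}-\bm{f}_0\|^2\mid D_r)\to0$. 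Write $v=\sigma^2/r$ for the vanishing noise variance and $\tau_j^2=c_j^2\sigma^2$ for the (fixed) slab variance, and let $\beta_{0,jk}^{(i)}$ be the wavelet coefficients of $\bm f_0$; the data reduce to independent $b_{jk}^{(i)}\sim\mathrm{N}(\beta_{0,jk}^{(i)},v)$. Using the independence across $(i,j,k)$ in the prior (8) together with known $\sigma^2$, the posterior factorizes over coordinates, so
\begin{align*}
\mathrm{E}\big(\|\bm{f}-\bm{f}_0\|^2\mid D_r\big)=\sum_{i=1}^{n}\sum_{j=0}^{\infty}\sum_{k=0}^{2^j-1}\mathrm{E}\big[(\beta_{jk}^{(i)}-\beta_{0,jk}^{(i)})^2\mid b_{jk}^{(i)}\big].
\end{align*}
For a single coordinate the Gaussian spike-and-slab posterior equals $w(b)\delta_0+(1-w(b))\,\mathrm{N}(\mu(b),s_j^2)$ with $\mu(b)=\frac{\tau_j^2}{\tau_j^2+v}\,b$ and $s_j^2=\frac{\tau_j^2 v}{\tau_j^2+v}$, so I would record the closed forms $\mathrm{E}[\beta\mid b]=(1-w(b))\mu(b)$ and $\mathrm{E}[\beta^2\mid b]=(1-w(b))(\mu(b)^2+s_j^2)$ and take expectations over $b\sim\mathrm{N}(\beta_{0,jk}^{(i)},v)$.

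The key step is a uniform-in-$r$ tail bound obtained by controlling the \emph{expected posterior Sobolev norm}. Using $\mathrm{E}[\beta^2\mid b]\le \mu(b)^2+s_j^2$ and $\mathrm{E}_b[b^2]=\beta_{0,jk}^2+v$, the contribution at level $j$ is $O(\beta_{0,jk}^2)$ (bias) plus $O(\tau_j^2)$ (variance, after an elementary AM--GM bound on $s_j^2$ and $\mu^2$). Weighting by $2^{2js}$ and summing over the $n2^j$ coordinates at level $j$, the bias part is bounded by $\|\bm{f}_0\|^2_{\mathcal{H}_n^s}$, while the variance part is proportional to $\sum_j 2^{2js}\,2^{j}\,\tau_j^2\propto\sum_j 2^{(2s+1-\gamma_1)j}$, which is finite \emph{exactly} when $\gamma_1>2s+1$. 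This yields $\mathrm{E}_{\bm f_0}\mathrm{E}(\|\bm{f}\|^2_{\mathcal{H}_n^s}\mid D_r)\le C(\|\bm{f}_0\|^2_{\mathcal{H}_n^s}+1)$ uniformly in $r$, and hence for every $J$,
\begin{align*}
\sum_{j>J}\sum_{i,k}\mathrm{E}[\beta_{jk}^2\mid b_{jk}]\le 2^{-2Js}\,\mathrm{E}(\|\bm{f}\|^2_{\mathcal{H}_n^s}\mid D_r),\qquad \sum_{j>J}\sum_{i,k}\beta_{0,jk}^2\le 2^{-2Js}\|\bm{f}_0\|^2_{\mathcal{H}_n^s},
\end{align*}
so the high-frequency block of the integrated risk is below $\eta/2$ once $J$ is large, uniformly in $r$.

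For the finitely many low-frequency coordinates ($j\le J$) I would let $r\to\infty$ at fixed $J$: the shrinkage factor $\tau_j^2/(\tau_j^2+v)\to1$ and $s_j^2\to0$, so for $\beta_{0,jk}\neq0$ the slab weight $1-w(b)\to1$ and the posterior concentrates at $\beta_{0,jk}$, while for $\beta_{0,jk}=0$ the bound $\mathrm{E}[\beta^2\mid b]\le\mu(b)^2+s_j^2$ integrates to $O(v)\to0$; thus each term's integrated risk vanishes and the finite sum is below $\eta/2$ for large $r$. Combining the two blocks (choose $J$, then $r$) gives integrated risk $<\eta$, completing the argument. I expect the main obstacle to be precisely the uniform high-frequency control of the second step: one must verify that the posterior does not accumulate spurious slab mass across the infinitely many negligible high-frequency coordinates, i.e.\ that the spike component dominates whenever $\beta_{0,jk}$ is small, so that the bound $\mathrm{E}(\|\bm f\|^2_{\mathcal H_n^s}\mid D_r)=O(1)$ genuinely holds; this is where the decay $c_j^2=\nu_1 2^{-\gamma_1 j}$ together with the hypothesis $\gamma_1>2s+1$ is essential, matching the prior's support to $\mathcal{H}_n^s$. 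A minor point to dispatch first is that the consistency statement is read off the independent prior (8), so the change-point coupling of Sections~4--5 does not enter the marginal posterior of $\bm f$ used here.
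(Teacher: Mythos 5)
Your overall strategy---bounding the integrated Bayes risk $\text{E}_{\bm f_0}\text{E}(\|\bm f-\bm f_0\|^2\mid D_r)$ directly, then splitting into a high-frequency block controlled uniformly in $r$ through the expected posterior Sobolev norm and a finite low-frequency block handled by letting $r\to\infty$---is a legitimate and more self-contained route than the paper's, which instead establishes consistency on bounded balls $\mathcal{H}_n^s(B)$ via Schwartz's theorem and a Kullback--Leibler support condition (citing Lemmas 1 and 2 of Lian) and then separately proves $\lim_{B\to\infty}\sup_{r>0}\text{E}_{\bm f_0}\Pi(\mathcal{H}_n^s(B)^c\mid D_r)=0$. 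Your high-frequency estimate, in which the series $\sum_j 2^{(2s+1-\gamma_1)j}$ converges exactly when $\gamma_1>2s+1$ and the bias term is absorbed into $\|\bm f_0\|^2_{\mathcal{H}_n^s}$, is essentially the same calculation as the paper's displays (38)--(41), so that part of the plan is sound.

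There is, however, a genuine gap in the step you wave off as ``a minor point.'' The posterior in Theorem 1 is \emph{not} the product over $i$ of single-observation spike-and-slab posteriors: the prior on the sequence $(\beta_{jk}^{(1)},\dots,\beta_{jk}^{(n)})$ is the change-point prior of Section 4, under which, given $\tau_{jk}=t$, the coefficients $\beta_{jk}^{(1)},\dots,\beta_{jk}^{(t-1)}$ share one common value and $\beta_{jk}^{(t)},\dots,\beta_{jk}^{(n)}$ share another, each drawn from the spike-and-slab mixture. Hence the posterior of $\beta_{jk}^{(i)}$ depends on all of $b_{jk}^{(1)},\dots,b_{jk}^{(n)}$ (through the pooled posterior of the common value and through $\Pi(\tau_{jk}=t\mid D_r)$), and your closed forms $\text{E}[\beta\mid b]=(1-w(b))\mu(b)$ with $\mu(b)=\tau_j^2 b/(\tau_j^2+v)$ are conditional expectations for a different model. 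The paper's proof handles this by writing $\text{E}(|\beta_{jk}^{(i)}|^2\mid D_r)\le\max_{1\le t\le n}\text{E}(|\beta_{jk}^{(i)}|^2\mid\tau_{jk}=t,D_r)$ and computing the pooled Gaussian posterior $\text{N}\big(c_j^2\sum_{l=1}^{t-1}b_{jk}^{(l)}/((t-1)c_j^2+1/r),\,c_j^2\sigma^2/(1+(t-1)c_j^2r)\big)$, leading to the uniform bound $c_j^2\sigma^2+(c_j^2/(c_j^2+1/r))^2(\sum_{l=1}^n b_{jk}^{(l)})^2$. Your argument is repairable by inserting exactly this conditioning---pooling only reduces posterior variance, so your level-$j$ bounds survive with $(\sum_l b_{jk}^{(l)})^2$ in place of $(b_{jk}^{(i)})^2$---but as written the coordinatewise factorization analyzes the wrong posterior. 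A smaller loose end: for the low-frequency coordinates with $\beta_{0,jk}\neq 0$ you assert pointwise posterior concentration but still need a domination or quantitative tail bound to conclude that the \emph{integrated} risk of those finitely many terms vanishes.
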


\begin{proof}
	Let $\Pi$ be a prior on $\mathcal{H}^s_n$. Schwartz's theorem \citep{Schwartz} gives the strong consistency of the posterior distribution under approximate condition. According to Example 6.20 in \cite{Ghosal}, if the Kullback–Leibler property holds for the prior, then the posterior distribution is consistent in the weak topology. Thus, for prior in space $\mathcal{H}^s_n$,  we need $\bm{f}_0$ in the Kullback-Leibler support of $\Pi$. The Kullback–Leibler divergence is defined as $\mathcal{K}(\bm{f}_{0},\bm{f})=\sum_{i=1}^{n}\int f_{i,0}\log(f_{i,0}/f_i)d\mu$, where $\mu$ is a dominating measure on the space of $\bm{f}$. In other words, we want $\Pi\left(\mathcal{K}(\bm{f}_0,\bm{f})<\epsilon\right)>0$ for all $\epsilon>0$. The prior setting in Section 3 reduces $\mathcal{K}(\bm{f}_0,\bm{f})$ to the Kullback-Leibler divergence between two Gaussian distributions that is the Kullback-Leibler divergence between $\beta_{jk}$ and $\beta_{jk,0}$, and thus $\Pi\left(\mathcal{K}(\bm{f}_0,\bm{f})<\epsilon\right)$ is bounded by 
	\begin{align}
		\label{eqn:eqlabel}
		\begin{split}
			\Pi\left(\sum_{i=1}^{n}\sum_{j=0}^{\infty}\sum_{k=0}^{2^j-1}|\beta_{jk}^{(i)}-\beta_{jk,0}^{(i)}|^2<\epsilon^2\right),
		\end{split}
	\end{align}
	where $\beta_{jk,0}^{(i)}$ are the wavelet coefficients of the true function $f_{i,0}$. 
	\par First, we consider a bounded subset $\mathcal{H}_n^s(B)=\{\bm{f}\in\mathcal{H}_n^s,\|\bm{f}\|_{\mathcal{H}^s_n}<B\}$ of the Sobolev space. The Lemma 1 and 2 of \cite{Lian} imply that (30) is positive. Thus, for any $B>0$, we have
	\begin{align}
		\Pi\left(\bm{f}\in \mathcal{H}_n^s(B):\|\bm{f}-\bm{f}_0\|>\epsilon|D_r\right)\rightarrow 0 \:\text{in probability}.
	\end{align}
	\par To complete the proof, we need to show that $\lim_{B\to\infty}\sup_{r>0}\text{E}_{\bm{f}_0}\Pi(\mathcal{H}_n^s(B)^c|D_r)\\=0$.
	By Markov's inequality, we have
	\begin{align}
		\Pi(\mathcal{H}_n^s(B)^c|D_r)\leq B^{-2}\left\{\sum_{i=1}^{n}\sum_{j=0}^{\infty}2^{2js}\sum_{k=0}^{2^j-1}\text{E}\left(|\beta_{jk}^{(i)}|^2\bigg|D_r\right)\right\}.
	\end{align}
	The expectation can be bounded by
	\begin{align}
		\label{eqn:eqlabel}
		\begin{split}
			\text{E}\left(|\beta_{jk}^{(i)}|^2\bigg|D_r\right)&=\sum_{t=1}^{n}\text{E}\left(|\beta_{jk}^{(i)}|^2\bigg|\tau_{jk}=t,D_r\right)\Pi(\tau_{jk}=t|D_r)\\
			&\leq \max_{1\leq t\leq n}\text{E}\left(|\beta_{jk}^{(i)}|^2\bigg|\tau_{jk}=t,D_r\right).
		\end{split}
	\end{align}
	For $\tau_{jk}=t$, the posterior distribution of the common value $\xi$ of $\{\beta_{jk}^{(i)}:i=1,\dots,t-1\}$ given $(\beta_{jk}\neq 0,D_r)$ is proportional to
	\begin{align}
		\label{eqn:eqlabel}
		\begin{split}
			\Big\{&\prod_{l=1}^{t-1}\exp\{-\frac{1}{2\sigma^2/r}(b_{jk}^{(l)}-\xi)^2\}\Big\}\times\exp\{-\frac{1}{2c_j^2\sigma^2}\xi^2\}\\
			&\propto\exp\left\{-\frac{(t-1)c_j^2+1/r}{2c_j^2\sigma^2/r}\Big[\xi^2-2\frac{c_j^2\sum_{l=1}^{t-1}b_{jk}^{(l)}}{(t-1)c_j^2+1/r}\xi\Big]\right\},
		\end{split}
	\end{align}
	and hence the corresponding distribution is $\text{N}\left(\frac{c_j^2\sum_{l=1}^{t-1}b_{jk}^{(l)}}{(t-1)c_j^2+1/r},\frac{c_j^2\sigma^2}{1+(t-1)c_j^2r}\right)$. Thus, for $i<t$, we have
	\begin{align}
		\text{E}\left(|\beta_{jk}^{(i)}|^2\bigg|\tau_{jk}=t,D_r\right)=\frac{c_j^2\sigma^2}{1+(t-1)c_j^2r}+\left(\frac{c_j^2}{(t-1)c_j^2+1/r}\right)^2\left(\sum_{l=1}^{t-1}b_{jk}^{(l)}\right)^2.
	\end{align} 
	Similarly, for $i\geq t$,
	\begin{align}
	\label{eqn:eqlabel}
	\begin{split}
		\text{E}\left(|\beta_{jk}^{(i)}|^2\bigg|\tau_{jk}=t,D_r\right)=&\frac{c_j^2\sigma^2}{1+(n-t+1)c_j^2r}\\
		&+\left(\frac{c_j^2}{(n-t+1)c_j^2+1/r}\right)^2\left(\sum_{l=t}^{n}b_{jk}^{(l)}\right)^2.
	\end{split}
	\end{align} 
	Note that if $t=1$, we only need to consider (36). Both (35) and (36) can be bounded by 
	\begin{align}
		\text{E}\left(|\beta_{jk}^{(i)}|^2\bigg|\tau_{jk}=t,D_r\right)\leq c_j^2\sigma^2+\left(\frac{c_j^2}{c_j^2+1/r}\right)^2\left(\sum_{l=1}^{n}b_{jk}^{(l)}\right)^2.
	\end{align}
	Thus we have
	\begin{align}
		\label{eqn:eqlabel}
		\begin{split}
			\Pi(\mathcal{H}_n^s(B)^c|D_r)\leq B^{-2}\left\{\sum_{i=1}^{n}\sum_{j=0}^{\infty}2^{2js}\sum_{k=0}^{2^j-1}\Big[c_j^2\sigma^2+\Big(\frac{c_j^2}{c_j^2+1/r}\Big)^2\Big(\sum_{l=1}^{n}b_{jk}^{(l)}\Big)^2 \vphantom{\sum_{i=1}^{n}\frac{c_{00}^2\sigma^2}{1+c_{00}^2r}}\Big]\right\}.
		\end{split}
	\end{align}
	Now we take the expectation of (38) with respect to $\bm{f}_0$ to obtain
	\begin{align}
		\label{eqn:eqlabel}
		\begin{split}
			&\text{E}_{\bm{f}_0}\Pi(\mathcal{H}_n^s(B)^c|D_r)\\
			&\phantom{\text{E}}\leq B^{-2}\left\{n\sum_{j=0}^{\infty}2^{2js}\sum_{k=0}^{2^j-1}\Big[c_j^2\sigma^2+\Big(\frac{c_j^2}{c_j^2+1/r}\Big)^2\{\frac{n\sigma^2}{r}+\Big(\sum_{l=1}^{n}\beta_{jk,0}^{(l)}\Big)^2\}\Big] \right\}\\
			&\phantom{\text{E}}\leq B^{-2}\left\{n\sum_{j=0}^{\infty}2^{2js}\sum_{k=0}^{2^j-1}\Big[c_j^2\sigma^2+nc_j^2\sigma^2+\Big(\sum_{l=1}^{n}\beta_{jk,0}^{(l)}\Big)^2\Big] \right\}.
		\end{split}
	\end{align}
	Replacing the hyperparameters using (9), we can further bound (39) by
	\begin{align}
		\label{eqn:eqlabel}
		\begin{split}
			&B^{-2}\left\{n(n+1)\sigma^2\nu_1\sum_{j=0}^{\infty}2^{(2s+1-\gamma_1)j}+n^2\sum_{j=0}^{\infty}2^{2js}\sum_{k=0}^{2^j-1}\sum_{l=1}^{n}|\beta_{jk,0}^{(l)}|^2\right\}\\
			&\phantom{B}= B^{-2}\left\{n(n+1)\sigma^2\nu_1\sum_{j=0}^{\infty}2^{(2s+1-\gamma_1)j}+n^2\|\bm{f}_0\|^2_{\mathcal{H}^s_n}\right\}.
		\end{split}
	\end{align}
	Under the assumption that $\gamma_1>2s+1$, we have 
	\begin{align}
		n(n+1)\sigma^2\nu_1\sum_{j=0}^{\infty}2^{(2s+1-\gamma_1)j}+ n^2\|\bm{f}_0\|^2_{\mathcal{H}^s_n}<\infty.
	\end{align}
	Thus, (39) goes to 0 as $B\rightarrow\infty$.
\end{proof}

\par We also need to show that we find the right model. For given $j,k$, we can define the following structures 1 to 5 denoted by $S_{jk}^{1},\dots,S_{jk}^{5}$:
\begin{enumerate}
	\item Change from nonzero to nonzero at $\tau_{jk}=t$:\\ $S_{jk}^1=\{\tau_{jk}=t,\beta_{jk}^{(1)}=\dots=\beta_{jk}^{(t-1)}=\xi_1,\beta_{jk}^{(t)}=\dots=\beta_{jk}^{(n)}=\xi_2,\xi_1\neq\xi_2\}$;
	\item Change from nonzero to zero at $\tau_{jk}=t$:\\
	$S_{jk}^2=\{\tau_{jk}=t,\beta_{jk}^{(1)}=\dots=\beta_{jk}^{(t-1)}=\xi,\beta_{jk}^{(t)}=\dots=\beta_{jk}^{(n)}=0,\xi\neq0\}$;
	\item Change from zero to nonzero at $\tau_{jk}=t$:\\
	$S_{jk}^3=\{\tau_{jk}=t,\beta_{jk}^{(1)}=\dots=\beta_{jk}^{(t-1)}=0,\beta_{jk}^{(t)}=\dots=\beta_{jk}^{(n)}=\xi,\xi\neq0\}$;
	\item No change and the value is nonzero:\\
	$S_{jk}^4=\{\beta_{jk}^{(1)}=\dots=\beta_{jk}^{(n)}=\xi,\xi\neq0\}$;
	\item No change and the value is zero:\\
	$S_{jk}^5=\{\beta_{jk}^{(1)}=\dots=\beta_{jk}^{(n)}=0\}$.
\end{enumerate}
We define a compatible model as the structure that not only has the same change point as the true model, but also can have $\beta_{jk}^{(i)}$ values in the neighborhood of the true value $\beta_{jk,0}^{(i)}$. For example, if the true Structure is 5, then the compatible model can be Structure 1--5, because a nonzero value $\xi$ can be small enough to be in the neighborhood of 0. If the true Structure is 1, then the only compatible model is itself, because 0 cannot be in the neighborhood of a predetermined nonzero value. Table 1 shows the compatible models for each true structure. Theorem 1 implies that the posterior probability of $\beta_{jk}^{(i)}$ in any neighborhood of the true value $\beta_{jk,0}^{(i)}$ tends to 1. This shows that the posterior probability of all non-compatible models together tends to 0. Hence for consistency of model selection, we only need to consider compatible models. 
\begin{table}
	\caption{Compatible models}
	\centering
	\begin{tabular}{cc}
		\hline
	True Structure & Compatible Model \\ 
	\hline
	Structure 1 & Structure 1\\ 
	Structure 2 & Structure 1,2\\
	Structure 3 & Structure 1,3\\
	Structure 4 & Structure 1,4\\
	Structure 5 & Structure 1,2,3,4,5\\
	\hline
\end{tabular}
\end{table}

\begin{lemma}
	Let $S_{jk,0}$ denote the true structure for given $j,k$. Then $\Pi(S_{jk}=S_{jk,0}|D_r)\rightarrow1$ in probability as $r\rightarrow\infty$.
\end{lemma}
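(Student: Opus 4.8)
The plan is to combine the global posterior consistency of Theorem~1 with a direct Bayes-factor comparison among the finitely many structures available at the coordinate $(j,k)$. First I would record the decomposition
\[\Pi(S_{jk}=S_{jk,0}\mid D_r)=1-\sum_{S\neq S_{jk,0}}\Pi(S_{jk}=S\mid D_r),\]
and split the competing structures into those incompatible with $S_{jk,0}$ and those compatible but unequal to it. Theorem~1 forces coordinatewise posterior mass near the truth: since $\|\bm{f}-\bm{f}_0\|<\epsilon$ entails $|\beta_{jk}^{(i)}-\beta_{jk,0}^{(i)}|<\epsilon$ for every $i$, the posterior probability of every structure whose admissible $\beta_{jk}^{(i)}$ cannot lie in a small neighborhood of $\beta_{jk,0}^{(i)}$—that is, every incompatible structure, as catalogued in Table~1—tends to $0$ in probability. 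This reduces the claim to showing that, among the compatible structures, the true one asymptotically dominates.

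Next I would treat the cases of Table~1 one at a time. When $S_{jk,0}$ is Structure~1 there is nothing to prove, since it is its own only compatible model. In the remaining cases every competing compatible structure differs from $S_{jk,0}$ only by replacing a spike (exact $0$) with a slab ($\mathrm{N}(0,c_j^2\sigma^2)$) on a segment where the truth vanishes, or by inserting one spurious slab at some change point $t$. I would form the Bayes factor of $S_{jk,0}$ against each competitor directly from the explicit marginal likelihoods (16) and (22), substituting the concentration $b_{jk}^{(l)}=\beta_{jk,0}^{(l)}+O_P(\sigma/\sqrt{r})$ that holds because the effective noise variance is $\sigma^2/r$.

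The decisive mechanism, which I would isolate as a single computation, is the spike-versus-slab comparison on a segment of length $m$ on which the truth is $0$. Writing out the slab marginal, completing the square in $\xi$ exactly as in (34), and using $\bar{b}=O_P(\sigma/\sqrt{mr})$ shows that the slab marginal equals the spike marginal times a factor $(mc_j^2 r+1)^{-1/2}e^{O_P(1)}=O_P(r^{-1/2})$; hence the spike beats the slab by an order $\sqrt{r}$. This settles true Structures~2 and 3 (the competitor misplaces a slab on the zero segment) and, applied to every segment simultaneously, true Structure~5. The same bookkeeping handles true Structure~4 against Structure~1: there no segment is zero, but the two-piece fit uses one extra slab, whose additional normalizing constant again contributes an $O_P(r^{-1/2})$ factor while the improvement in fit is only $e^{O_P(1)}$ because $(\bar{b}_1-\bar{b}_2)^2=O_P(\sigma^2/r)$ under a constant truth. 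Summing the competitor's marginal over its free change point $\tau\in\{2,\dots,n\}$—the relevant bookkeeping when the truth is a no-change structure—leaves the bound $O_P(r^{-1/2})$ intact, so every competing compatible structure has vanishing posterior odds.

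The main obstacle is that, unlike in the incompatible case, the compatible competitors achieve an asymptotically identical fit, so the decision is never driven by an exponentially separated likelihood; it is driven entirely by the sub-leading $\sqrt{r}$ factors hidden in the Gaussian normalizing constants $[c_j^2 m r+1]^{-1/2}$ and in $B_{i-1},\tilde{B}_{n-i+1}$ of (22). The delicate step is therefore to expand these normalizing constants and the exponents to the order at which they actually differ, and to verify that the residual stochastic terms are genuinely $O_P(1)$ in the exponent—so that the $r^{-1/2}$ prefactor wins—uniformly over the finitely many competitors and change points; this uniform control is precisely what upgrades the comparison to the required \emph{in probability} conclusion.
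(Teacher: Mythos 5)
Your proposal follows essentially the same route as the paper: it reduces to compatible structures via Theorem~1 (as the paper does in the discussion preceding the lemma), then compares marginal likelihoods of compatible competitors against the truth, showing that each superfluous slab contributes an $O(r^{-1/2})$ Gaussian normalizing factor while the exponent stays $O_P(1)$ because the data concentrate at rate $\sigma/\sqrt{r}$ around a truth under which the quadratic-form differences cancel. Your direct substitution $b_{jk}^{(l)}=\beta_{jk,0}^{(l)}+O_P(\sigma/\sqrt{r})$ replaces the paper's Jensen-plus-bounded-expectation argument for the same $O_P(1)$ control, but this is a cosmetic rather than substantive difference.
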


\begin{proof}
	It suffices to show that the ratio of the marginal likelihood of a compatible structure other than the true structure and the true structure goes to zero in probability. In this proof, we only show the cases when the true parameter has Structures 4 or 5. The proofs for other cases follow from similar arguments.
	\par First, we need to compute the following marginal likelihoods with a known $\sigma^2$. The marginal likelihood for Structure 1 is 
	\begin{align}
		\label{eqn:eqlabel}
		\begin{split}
			&\text{P}(b_{jk}^{(1)},\dots,b_{jk}^{(n)}|S_{jk}^1)\\
			&\phantom{\text{P}}=\int\int \Bigl\{\prod^{t-1}_{l=1}\phi(b^{(l)}_{jk};\xi_1,\sigma^2/r)\Bigr\} \Bigl\{\prod^{n}_{l=t}\phi(b^{(l)}_{jk};\xi_2,\sigma^2/r)\Bigr\}\\ &\phantom{\int\int}\times\phi(\xi_1;0,c^2_j\sigma^2)\phi(\xi_2;0,c^2_j\sigma^2)d\xi_1 d\xi_2\\
			&\phantom{\text{P}}=\big(c_j^2r(t-1)+1\big)^{-1/2}\big(c_j^2r(n-t+1)+1\big)^{-1/2}(2\pi\sigma^2/r)^{-n/2}\\
			&\phantom{=\big(c_j^2}\times\exp\Big\{\frac{r}{2\sigma^2}\Big[\frac{c_j^2}{c_j^2(t-1)+1/r}\big(\sum_{l=1}^{t-1}b_{jk}^{(l)}\big)^2 \\
			&\phantom{times\exp\Big\{\frac{r}{2\sigma^2}}+\frac{c_j^2}{c_j^2(n-t+1)+1/r}\big(\sum_{l=t}^{n}b_{jk}^{(l)}\big)^2-\sum_{l=1}^{n}(b_{jk}^{(l)})^2\Big]\Big\}.
		\end{split}
	\end{align}
	The marginal likelihood for Structure 2 is
	\begin{align}
		\label{eqn:eqlabel}
		\begin{split}
			&\text{P}(b_{jk}^{(1)},\dots,b_{jk}^{(n)}|S_{jk}^2)\\
			&\phantom{\text{P}}=\int \Bigl\{\prod^{t-1}_{l=1}\phi(b^{(l)}_{jk};\xi,\sigma^2/r)\Bigr\} \Bigl\{\prod^{n}_{l=t}\phi(b^{(l)}_{jk};0,\sigma^2/r)\Bigr\} \phi(\xi;0,c^2_j\sigma^2)d\xi\\
			&\phantom{\text{P}}=\big(c_j^2r(t-1)+1\big)^{-1/2}(2\pi\sigma^2/r)^{-n/2}\\
			&\phantom{=\big(c_j^2}\times\exp\left\{\frac{r}{2\sigma^2}\left[\frac{c_j^2}{c_j^2(t-1)+1/r}\left(\sum_{l=1}^{t-1}b_{jk}^{(l)}\right)^2-\sum_{l=1}^{n}(b_{jk}^{(l)})^2\right]\right\}.
		\end{split}
	\end{align}
	The marginal likelihood for Structure 3 is
	\begin{align}
		\label{eqn:eqlabel}
		\begin{split}
			&\text{P}(b_{jk}^{(1)},\dots,b_{jk}^{(n)}|S_{jk}^3)\\
			&\phantom{\text{P}}=\int \Bigl\{\prod^{t-1}_{l=1}\phi(b^{(l)}_{jk};0,\sigma^2/r)\Bigr\} \Bigl\{\prod^{n}_{l=t}\phi(b^{(l)}_{jk};\xi,\sigma^2/r)\Bigr\} \phi(\xi;0,c^2_j\sigma^2)d\xi\\
			&\phantom{\text{P}}=\big(c_j^2r(n-t+1)+1\big)^{-1/2}(2\pi\sigma^2/r)^{-n/2}\\
			&\phantom{\big(c_j^2}\times\exp\left\{\frac{r}{2\sigma^2}\left[\frac{c_j^2}{c_j^2(n-t+1)+1/r}\left(\sum_{l=t}^{n}b_{jk}^{(l)}\right)^2-\sum_{l=1}^{n}(b_{jk}^{(l)})^2\right]\right\}.
		\end{split}
	\end{align}
	The marginal likelihood for Structure 4 is
	\begin{align}
		\label{eqn:eqlabel}
		\begin{split}
			&\text{P}(b_{jk}^{(1)},\dots,b_{jk}^{(n)}|S_{jk}^4)\\
			&\phantom{\text{P}}=\int \Bigl\{\prod^{n}_{l=1}\phi(b^{(l)}_{jk};\xi,\sigma^2/r)\Bigr\}\phi(\xi;0,c^2_j\sigma^2)d\xi\\
			&\phantom{\text{P}}=\big(c_j^2rn+1\big)^{-1/2}(2\pi\sigma^2/r)^{-n/2}\\
			&\phantom{=\big(c_j^2}\times\exp\left\{\frac{r}{2\sigma^2}\left[\frac{c_j^2}{c_j^2n+1/r}\left(\sum_{l=1}^{n}b_{jk}^{(l)}\right)^2-\sum_{l=1}^{n}(b_{jk}^{(l)})^2\right]\right\}.
		\end{split}
	\end{align}
	The marginal likelihood for Structure 5 is
	\begin{align}
		\label{eqn:eqlabel}
		\begin{split}
			&\text{P}(b_{jk}^{(1)},\dots,b_{jk}^{(n)}|S_{jk}^5)\\
			&\phantom{\text{P}}=\prod^{n}_{l=1}\phi(b^{(l)}_{jk};0,\sigma^2/r)\\
			&\phantom{\text{P}}=(2\pi\sigma^2/r)^{-n/2}\exp\left\{-\frac{r}{2\sigma^2}\sum_{l=1}^{n}(b_jk^{(l)})^2\right\}.
		\end{split}
	\end{align}
	\par If the true parameter has Structure 4, and the compatible model is Structure 1, then we have the following marginal likelihood ratio: 
	\begin{align}
		\label{eqn:eqlabel}
		\begin{split}
			&\frac{\text{P}(b_{jk}^{(1)},\dots,b_{jk}^{(n)}|S_{jk}^1)}{\text{P}(b_{jk}^{(1)},\dots,b_{jk}^{(n)}|S_{jk}^4)}\\
			&=\Big[\frac{c_j^2rn+1}{\left(c_j^2r(t-1)+1\right)\left(c_j^2r(n-t+1)+1\right)}\Big]^{1/2}\\
			&\phantom{=\Big[}\times\exp\Big\{\frac{r}{2\sigma^2}\Big[\frac{c_j^2}{c_j^2(t-1)+1/r}\Big(\sum_{l=1}^{t-1}b_{jk}^{(l)}\Big)^2+\frac{c_j^2}{c_j^2(n-t+1)+1/r}\Big(\sum_{l=t}^{n}b_{jk}^{(l)}\Big)^2\\ &\phantom{times\exp\Big\{}-\frac{c_j^2}{c_j^2n+1/r}\Big(\sum_{l=1}^{n}b_{jk}^{(l)}\Big)^2\Big]\Big\}.
		\end{split}
	\end{align}
	The first term in the squre root goes to 0 as $r\rightarrow\infty$. Hence it suffices to show that the form inside the exponential is $\mathcal{O}_p(1)$. Being a special case of clustering, our situation is similar to that of \cite{Suarez}, but it seems that their argument is incomplete as they overlooked a factor $r$. For the sake of completeness, we present the argument, which can also be used to complete the proof Lemma 1 of \cite{Suarez}.
	\par As $r\rightarrow\infty$, we have
	\begin{align}
		\label{eqn:eqlabel}
		\begin{split}
			&\frac{r}{2\sigma^2}\Big[\frac{c_j^2}{c_j^2(t-1)+1/r}\Big(\sum_{l=1}^{t-1}b_{jk}^{(l)}\Big)^2+\frac{c_j^2}{c_j^2(n-t+1)+1/r}\Big(\sum_{l=t}^{n}b_{jk}^{(l)}\Big)^2\\ &\phantom{\frac{r}{2\sigma^2}}-\frac{c_j^2}{c_j^2n+1/r}\Big(\sum_{l=1}^{n}b_{jk}^{(l)}\Big)^2\Big]\\
			&\phantom{-}=\frac{r}{2\sigma^2}\Big[\frac{\Big(\sum_{l=1}^{t-1}b_{jk}^{(l)}\Big)^2}{(t-1)+\mathcal{O}(1/r)}+\frac{\Big(\sum_{l=t}^{n}b_{jk}^{(l)}\Big)^2}{(n-t+1)+\mathcal{O}(1/r)} -\frac{\Big(\sum_{l=1}^{n}b_{jk}^{(l)}\Big)^2}{n+\mathcal{O}(1/r)}+\mathcal{O}(1/r)\Big]\\
			&\phantom{-}=\frac{r}{2\sigma^2}\Big[\frac{\Big(\sum_{l=1}^{t-1}b_{jk}^{(l)}\Big)^2}{(t-1)}+\frac{\Big(\sum_{l=t}^{n}b_{jk}^{(l)}\Big)^2}{(n-t+1)} -\frac{\Big(\sum_{l=1}^{n}b_{jk}^{(l)}\Big)^2}{n}\Big]+\mathcal{O}(1).
		\end{split}
	\end{align}
	Let $U=\sum_{l=1}^{t-1}b_{jk}^{(l)}$ and $V=\sum_{l=t}^{n}b_{jk}^{(l)}$. Consider a random variable $W$ which has the following distribution:
	\begin{equation}
		W=
		\begin{cases}
			\frac{U}{t-1}, & \text{with probability}\ \frac{t-1}{n}, \\
			\frac{V}{n-t+1}, & \text{with probability}\ \frac{n-t+1}{n}.
		\end{cases}
	\end{equation}
	Let $\psi(w)=w^2$. Then by Jensen's inequality, we have
	\begin{align}
		\label{eqn:eqlabel}
		\begin{split}
			&\Big[\frac{U}{t-1}\frac{t-1}{n}+\frac{V}{n-t+1}\frac{n-t+1}{n}\Big]^2\\ &\phantom{\Big[}\leq\Big[\Big(\frac{U}{t-1}\Big)^2\frac{t-1}{n}+\Big(\frac{V}{n-t+1}\Big)^2\frac{n-t+1}{n}\Big].
		\end{split}
	\end{align}
	That is
	\begin{align}
		\label{eqn:eqlabel}
		\begin{split}
			\frac{U^2}{t-1}+\frac{V^2}{n-t+1}-\frac{(U+V)^2}{n}\geq 0.
		\end{split}
	\end{align}
	Thus, the term in the brackets of the exponential in (47) is nonnegative. Hence it suffices to control its expectation and show that it remains bounded as $r\rightarrow\infty$. Suppose that the true value is $\beta_{jk,0}^{(i)}=\xi$. Then the expectation of (48) with respect to the true value is 
	\begin{align}
		\label{eqn:eqlabel}
		\begin{split}
			&\frac{r}{2\sigma^2}\Big[\frac{(t-1)^2\xi^2+\displaystyle\frac{(t-1)\sigma^2}{r}}{(t-1)}+\frac{(n-t+1)^2\xi^2+\displaystyle\frac{(n-t+1)\sigma^2}{r}}{(n-t+1)}\\
			&\phantom{\frac{r}{2\sigma^2}}-\frac{n^2\xi^2+\displaystyle\frac{n\sigma^2}{r}}{n}\Big]+\mathcal{O}(1)\\
			&\phantom{-}=\frac{r\xi^2}{2\sigma^2}\Big[(t-1)+(n-t+1)-n\Big]+\mathcal{O}(1),
		\end{split}
	\end{align}
	and the first term vanishies. Thus the exponential term in (47) is bounded in probability. Hence (47) goes to 0 as $r\rightarrow\infty$. 
	\par If the true parameter has Structure 5, and the compatible model is Structure 1, then we have the following marginal likelihood ratio: 
	\begin{align}
		\label{eqn:eqlabel}
		\begin{split}
			&\frac{\text{P}(b_{jk}^{(1)},\dots,b_{jk}^{(n)}|S_{jk}^1)}{\text{P}(b_{jk}^{(1)},\dots,b_{jk}^{(n)}|S_{jk}^5)}\\
			&=\big(c_j^2r(t-1)+1\big)^{-1/2}\big(c_j^2r(n-t+1)+1\big)^{-1/2}\\
			&\phantom{\big(c_j^2}\times\exp\Big\{\frac{r}{2\sigma^2}\Big[\frac{c_j^2}{c_j^2(t-1)+1/r}\Big(\sum_{l=1}^{t-1}b_{jk}^{(l)}\Big)^2+\frac{c_j^2}{c_j^2(n-t+1)+1/r}\Big(\sum_{l=t}^{n}b_{jk}^{(l)}\Big)^2\Big]\Big\}.
		\end{split}
	\end{align}
	The first two terms with the squre root goes to 0 as $r\rightarrow\infty$. Similarly, we have 
	\begin{align}
		\label{eqn:eqlabel}
		\begin{split}
			&\frac{r}{2\sigma^2}\left[\frac{c_j^2}{c_j^2(t-1)+1/r}\left(\sum_{l=1}^{t-1}b_{jk}^{(l)}\right)^2+\frac{c_j^2}{c_j^2(n-t+1)+1/r}\left(\sum_{l=t}^{n}b_{jk}^{(l)}\right)^2\right]\\
			&\phantom{\frac{r}{2\sigma^2}}=\frac{r}{2\sigma^2}\left[\frac{\left(\sum_{l=1}^{t-1}b_{jk}^{(l)}\right)^2}{(t-1)}+\frac{\left(\sum_{l=t}^{n}b_{jk}^{(l)}\right)^2}{(n-t+1)}\right]+\mathcal{O}(1),
		\end{split}
	\end{align}
	which is always nonnegative.
	Since the true value is $\beta_{jk,0}^{(i)}=0$, then the expectation of (54) with respect to the true value is
	\begin{align}
		\label{eqn:eqlabel}
		\begin{split}
			&\frac{r}{2\sigma^2}\left[\frac{\displaystyle\frac{(t-1)\sigma^2}{r}}{(t-1)}+\frac{\displaystyle\frac{(n-t+1)\sigma^2}{r}}{(n-t+1)}\right]+\mathcal{O}(1),
		\end{split}
	\end{align}
	which is bounded. Hence we can conclude that the exponential term in (53) is bounded in probability, and the whole expression in (53) goes to 0 as  $r\rightarrow\infty$. 
	\par Similarly, we can show that the marginal likelihood ratio also goes 0 when the compatible models are Structure 2--4. 
	\par Structure 5 is the true model, so the marginal likelihood ratio is 1 if the compatible model is structure 5.
\end{proof}

\section{Simulation}
\par In order to study the performance of our method, we implement it on a set of simulated data. Since we detect the change point through the features extracted by the DWT, we first generate the wavelet coefficients and then apply the inverse discrete wavelet transform to get the functional data. We generate 16 features for first data point by a uniform distribution on $[0, 0.5]$. To make the change distinguishable, we generate 16 features for last data point by a uniform distribution on $[0.5, 1]$. Suppose that we have 100 data points, and there is one change in each feature. We randomly sample 16 numbers from 1 to 100 and regard them as the change point for the 16 features. To generate the sequence of 100 data, we repeat the feature of the first data point and change it to that of the last data point after the change point. Thus, we have a sequence of data representated by the true feature values. After applying the inverse discrete wavelet transform, we get a sequence of 100 true signals. To generate features for the data with different noise, we sample from the normal distribution with the true feature values as the mean value and variance of 0.01, 0.1, and 1, respectively. Hence we get three sequences of 100 functional observations after applying the inverse discrete wavelet transform to them. 
\par We apply our method to the observations. For the true signals, we use (24) to compute the similarity. The change point is the value of $k$ where $C(k)$ in (25) is the minimum. Once we detect the first change point, we divide the sequence of data into two subgroups. Furthermore, we can find the change point in these two groups. We can continue the process to divide the data into more subgroups, and stop either the plot of $C(k)$ versus $k$ is relatively flat which means that there is not much difference in these data, or the minimum number of data points is reached, or the the maximum step of the resulting binary tree is reached. In this study, we stop either if  $\text{max}(C(k))-\text{min}(C(k))<0.1$, or there are less than 10 data points in the group, or the resulting binary tree has 3 steps. We compare our results with the E-Divisive method \citep{JamesR} in the R package $\tt{ecp}$, which also estimates multiple change points by iteratively applying a procedure for locating a single change point. We apply the E-Divisive method on the wavelet transform of the observations. Table 2 shows the change points for different sequences of observations detected by our method and E-Divisive method. The numbers in the parentheses denote the hierarchical order of the change points. When the variance is small (0.01), the change points our method detects are exactly the same as the true change points. With a larger variance (0.1), our method still can detect the most of the change points correctly. When the variance is large, naturally it would be difficult to detect the change points by any method.
\begin{table}
	\caption{Change point detection results comparison}
	\centering
	\begin{tabular}{ccc}
		\hline
		Method & Data & Change points \\ 
		\hline
		     & True signal& $26^{(2)}$, $49^{(1)}$, $69^{(2)}$\\
		Our& Observations with variance 0.01 & $26^{(2)}$, $49^{(1)}$, $69^{(2)}$\\
		E-Divisive& Observations with variance 0.01 & $25^{(2)}$, $49^{(1)}$, $75^{(2)}$\\
		Our& Observations with variance 0.1 & $26^{(2)}$, $49^{(1)}$, $68^{(3)}$, $75^{(2)}$, $92^{(3)}$\\
		E-Divisive& Observations with variance 0.1 & $25^{(2)}$, $49^{(1)}$, $62^{(3)}$, $75^{(2)}$, $88^{(3)}$\\
		Our& Observations with variance 1 & $5^{(3)}$, $24^{(2)}$, $40^{(3)}$, $46^{(1)}$, $94^{(3)}$, $99^{(2)}$\\
		E-Divisive& Observations with variance 1 & $42^{(1)}$, $70^{(2)}$\\
		\hline
	\end{tabular}
\end{table}

\section{Application}
\par On Berkeley Earth (\url{http://berkeleyearth.org/data/}), we can find the land-surface monthly average temperature between 1753--2016. These temperatures are 
in degrees Celsius and reported as anomalies relative to the average temperature from Jan. 1951 to Dec. 1980. We can construct a set of functional data by the 12 monthly average temperatures in each year. We smooth the data by the basis expansion. Thus we get 264 functional data ordered by the year. Figure 1 shows the plot of the 264 functional data. We believe there is a change in these functional data. Figure 2 displays the curves for every 66 years, and we can see the change in the pattern of curves.
\par Figure 3 is the plot of $C(k)$ versus different $k$. We detect the change point of this sequence of functional data at the year 1914. Figure 4 shows the curves before the change point which are the years 1753--1913 and the curves after the change point which are the years 1914--2016. We can see that the patterns are very different in these two plots. 
\par Furthermore, we can find the change point in these two subgroups. Figure 5 is the plot of $C(k)$ versus $k$ between 1753--1913, and we detect the change point at year 1839. Figure 6 shows the curves before the change point which are the years 1753--1838 and the curves after the change point which are the years 1839--1913. Figure 7 is the plot of $C(k)$ versus $k$ between 1914--2016, and we detect the change point at year 1969. Figure 8 shows the curves before the change point which are the years 1914--1968 and the curves after the change point which are the years 1969--2016. Hence we divide the data into four subgroups. We continue the process to divide the data into more subgroups, and stop if $\text{max}(C(k))-\text{min}(C(k))<0.1$. We generate 15 subgroups. Figure 9 demonstrates the hierarchical structure in the subgroups. 
\par In Section 6, we assume that we have $r$ replications of the data. To align with this assumption, we may group the data by every several consecutive years, and treat each group as a non-separable block. We can represent each block by the most representative pattern. For example, we can group the climate data by every 10 years and get 26 blocks. There are multiple ways to generate the most representative patterns, hence we can get different replications of the 26 blocks. Note that when we choose the number of observations in each block, we need to make sure that there is no distinct difference in patterns in that block.

\begin{figure*}
	\begin{subfigure}[b]{0.45\textwidth}
		\centering
		\includegraphics[width=\textwidth]{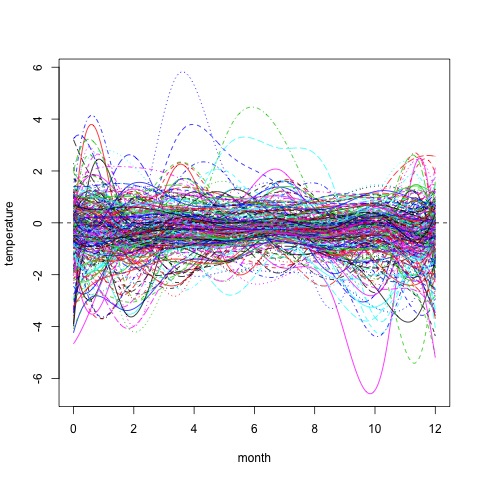}
	\end{subfigure}
	\caption{Land-surface average temperature curves between 1753--2016}
	\vskip\baselineskip
	\begin{subfigure}[b]{0.45\textwidth}
		\centering
		\includegraphics[width=\textwidth]{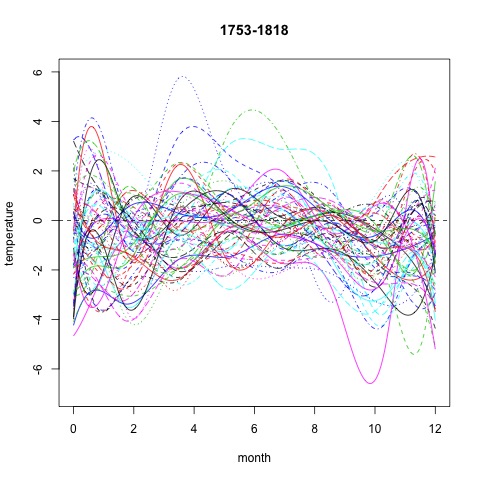}
	\end{subfigure}
	\quad
	\begin{subfigure}[b]{0.45\textwidth}
		\centering
		\includegraphics[width=\textwidth]{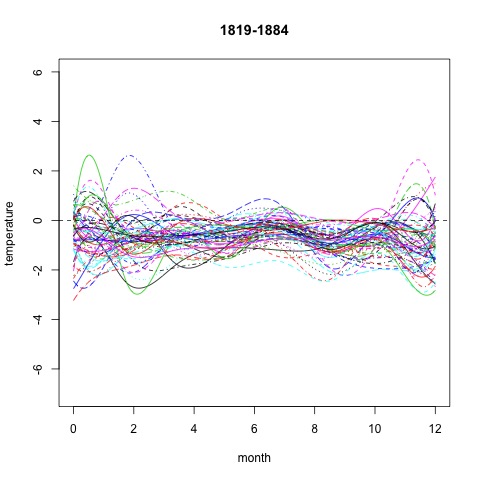}
	\end{subfigure}
	\vskip\baselineskip
	\begin{subfigure}[b]{0.45\textwidth}
		\centering
		\includegraphics[width=\textwidth]{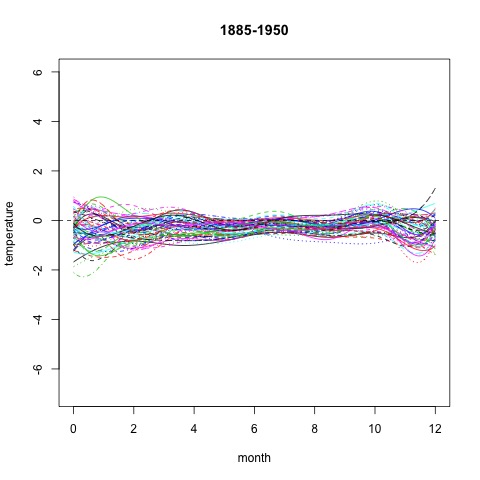}
	\end{subfigure}
    \quad
	\begin{subfigure}[b]{0.45\textwidth}
		\centering
		\includegraphics[width=\textwidth]{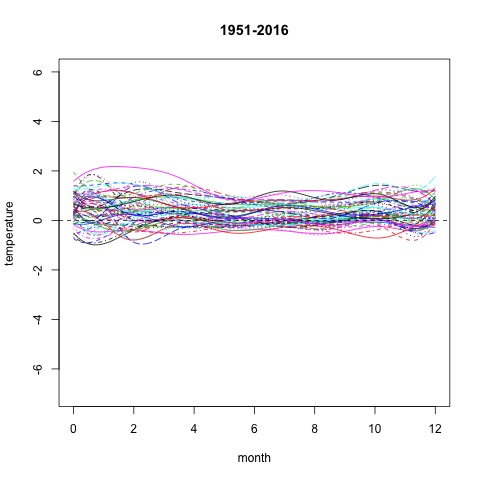}
	\end{subfigure}
	\quad
	\caption{Land-surface average temperature curves for every 66 years}
\end{figure*}

\begin{figure*}
	\begin{subfigure}[b]{0.6\textwidth}
		\centering
		\includegraphics[width=\textwidth]{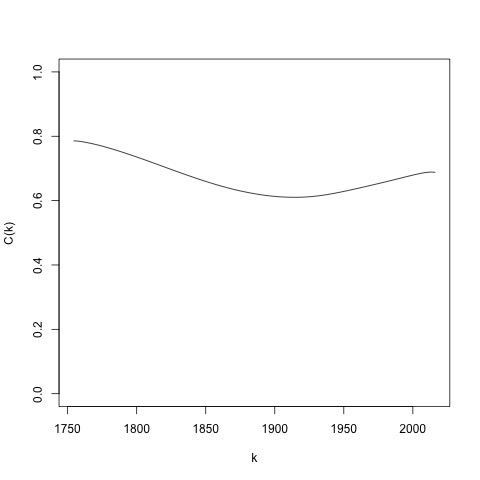}
	\end{subfigure}
    \caption{$C(k)$ for different $k$ between 1753--2016}
    \vskip\baselineskip
	\begin{subfigure}[b]{0.45\textwidth}
		\centering
		\includegraphics[width=\textwidth]{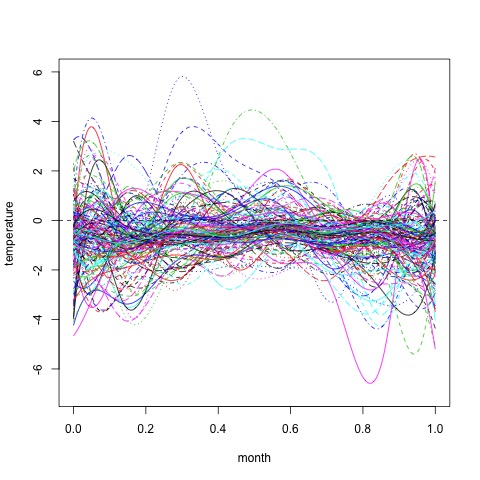}
	\end{subfigure}
	 \quad
	\begin{subfigure}[b]{0.45\textwidth}
		\centering
		\includegraphics[width=\textwidth]{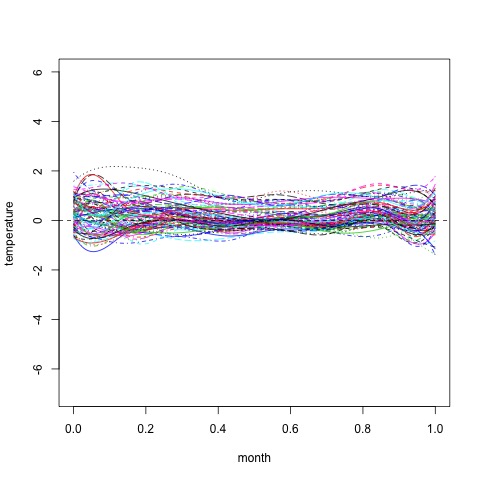}
	\end{subfigure}
	\caption{The plot on the left is the land-surface average temperature curves between 1753--1913. The plot on the right is the land-surface average temperature curves between 1914--2016.}
\end{figure*}

\begin{figure*}
	\begin{subfigure}[b]{0.6\textwidth}
		\centering
		\includegraphics[width=\textwidth]{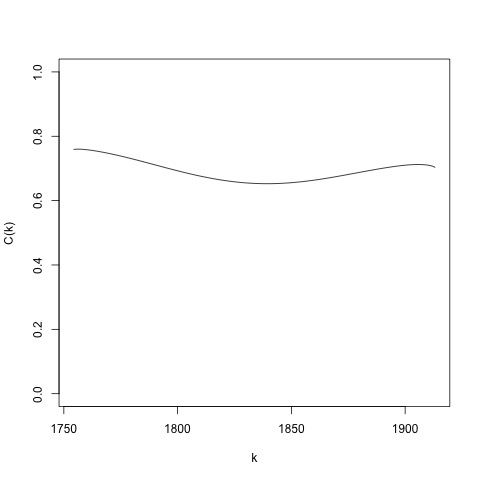}
	\end{subfigure}
	\caption{$C(k)$ for different $k$ between 1753--1913}
	\vskip\baselineskip
	\begin{subfigure}[b]{0.45\textwidth}
		\centering
		\includegraphics[width=\textwidth]{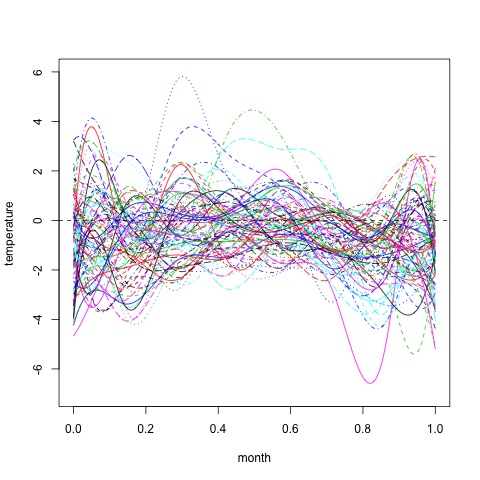}
	\end{subfigure}
	\quad
	\begin{subfigure}[b]{0.45\textwidth}
		\centering
		\includegraphics[width=\textwidth]{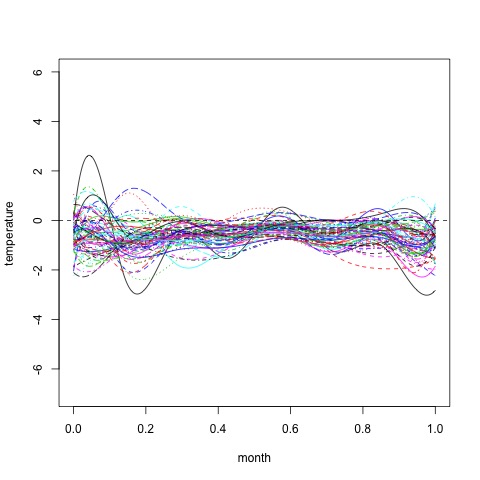}
	\end{subfigure}
	\caption{The plot on the left is the land-surface average temperature curves between 1753--1838. The plot on the right is the land-surface average temperature curves between 1839--1913.}
\end{figure*}

\begin{figure*}
	\begin{subfigure}[b]{0.6\textwidth}
		\centering
		\includegraphics[width=\textwidth]{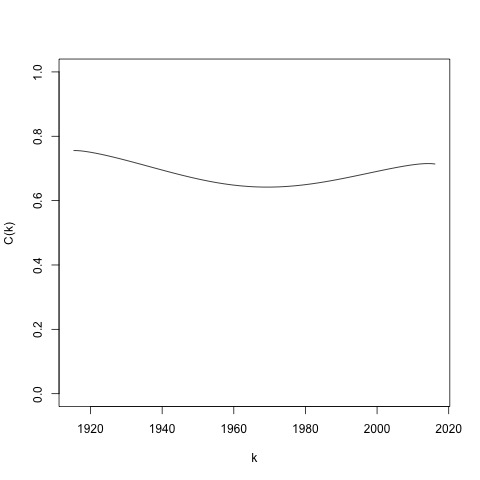}
	\end{subfigure}
	\caption{$C(k)$ for different $k$ between 1914--2016}
	\vskip\baselineskip
	\begin{subfigure}[b]{0.45\textwidth}
		\centering
		\includegraphics[width=\textwidth]{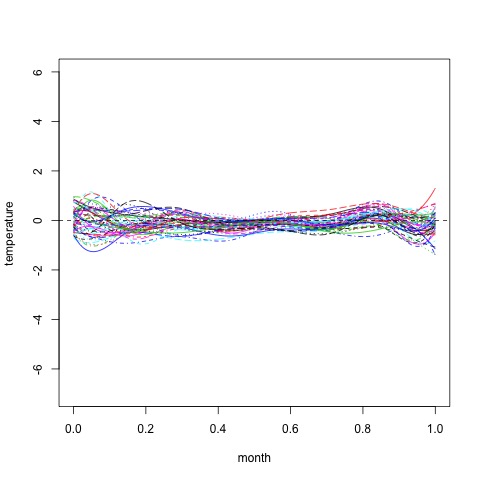}
	\end{subfigure}
	\quad
	\begin{subfigure}[b]{0.45\textwidth}
		\centering
		\includegraphics[width=\textwidth]{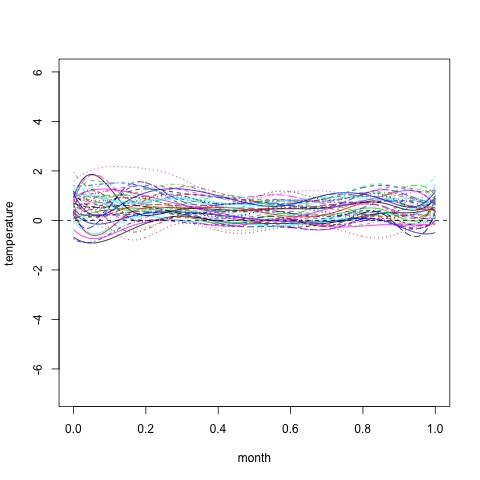}
	\end{subfigure}
	\caption{The plot on the left is the land-surface average temperature curves between 1914--1968. The plot on the right is the land-surface average temperature curves between 1969--2016.}
\end{figure*}

\begin{figure*} 
\begin{forest}
 for tree={grow'=0,
 l sep=0.1cm,
 s sep=1cm},
 [1753-2016 [1753-1913[1753-1838[1753-1800[1753-1778[1753-1766[1753-1760][1761-1766]][1767-1778]][1779-1800]][1801-1838[1801-1820[1801-1811][1812-1820]][1821-1838]]][1839-1913[1839-1877[1839-1857][1858-1877]][1878-1913[1878-1897][1898-1913]]]] [1914-2016[1914-1968[1914-1941][1942-1968]][1969-2016[1969-1993][1994-2016]]] ]
\end{forest}
\caption{Subgroups between 1753--2016}
\end{figure*}
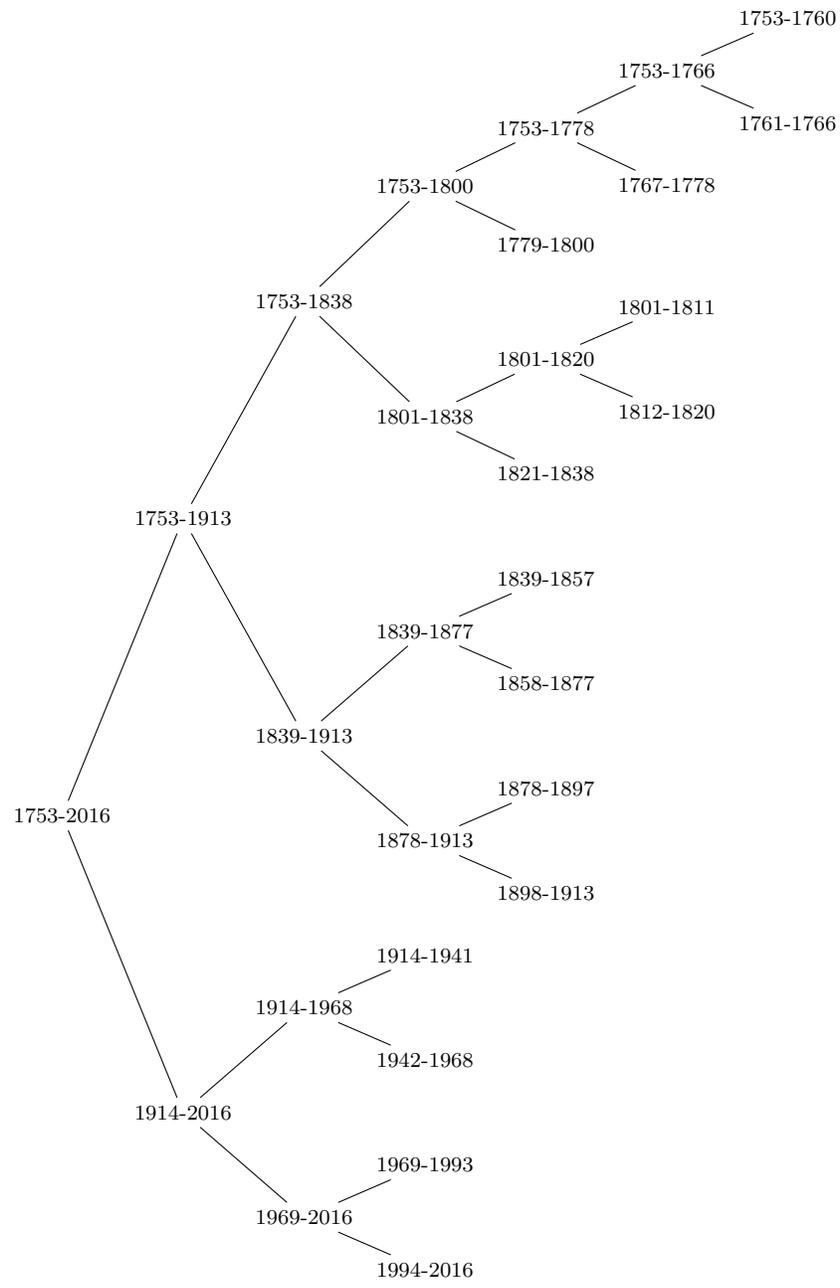

\bibliographystyle{ba}
\bibliography{ref}

\end{document}